\numberwithin{equation}{section}
\newtheorem{theorem}{Theorem}[section]
\newtheorem*{theorem*}{Theorem}
\newtheorem{lemma}[theorem]{Lemma}
\newtheorem{conjecture}[theorem]{Conjecture}
\newtheorem{corollary}[theorem]{Corollary}
\theoremstyle{definition}
\newcommand{\R}{\mathbb{R}}
\newcommand{\C}{\mathbb{C}}
\newcommand{\M}{\mathcal{M}}
\DeclareMathOperator{\GL}{GL}
\DeclareMathOperator{\SO}{SO}
\DeclareMathOperator{\Hom}{Hom}
\newcommand{\GLV}{\mathrm{GL}(V)}
\begin{document}

\title{The generalized phase retrieval problem over compact groups}

\author{
    \IEEEauthorblockN{Tamir Bendory and Dan Edidin}
    \thanks{T.B. is with the School
of Electrical Engineering, Tel Aviv University, Israel. D.E. is with the Department of Mathematics,  University of Missouri, MO, USA. 
    T.B. and D.E. were supported by the BSF grant no. 2020159. T.B. is also supported in part by the NSF-BSF grant no. 2019752, in part by the ISF grant no. 1924/21, and in part by a grant from The Center for AI and Data Science at Tel Aviv University (TAD). D.E. was also supported by NSF-DMS 2205626.}
}

\maketitle
 

\begin{abstract}
The classical phase retrieval problem involves estimating a signal from its Fourier magnitudes (power spectrum) by leveraging prior information about the desired signal. This paper extends the problem to compact groups, addressing the recovery of a set of matrices from their Gram matrices. In this broader context, the missing phases in Fourier space are replaced by missing unitary or orthogonal matrices arising from the action of a compact group on a finite-dimensional vector space. This generalization is driven by applications in multi-reference alignment and single-particle cryo-electron microscopy, a pivotal technology in structural biology.
We define the generalized phase retrieval problem over compact groups and explore its underlying algebraic structure. We survey recent results on the uniqueness of solutions, focusing on the significant class of semialgebraic priors. Furthermore, we present a family of algorithms inspired by classical phase retrieval techniques. Finally, we propose a conjecture on the stability of the problem based on bi-Lipschitz analysis, supported by numerical experiments.
\end{abstract}

\section{Introduction}

\textbf{The classical phase retrieval problem.} 
The phase retrieval problem originated in the early 20th century with X-ray crystallography, a key method for determining molecular structures that has driven significant scientific advancements. 
The problem is generally formulated as:  
\begin{equation} \label{eq:pr}  
    \text{find } x \in \Omega \quad \text{subject to} \quad y = |Fx|^2,  
\end{equation}  
where \( y \in \mathbb{R}^M_{\geq 0} \) is the measurement vector, \( F \in \mathbb{C}^{M \times N} \) is a Fourier-type matrix, \( \Omega \) is the signal space of interest, and the absolute value is taken entry-wise. The signal space \( \Omega \) varies by application: in X-ray crystallography, it represents sparse signals~\cite{elser2018benchmark}; in other cases, \( x \) may be limited to a known support~\cite{barnett2022geometry}. In ptychography, \( \Omega \) is the column space of a short-time Fourier transform matrix~\cite{jaganathan2016stft, bendory2017non, iwen2020phase}. Phaseless measurements are typically invariant under symmetry groups determined by \( F \) and \( \Omega \), allowing only the orbit of \( x \) under this symmetry to be uniquely recovered. For a detailed survey, see~\cite{shechtman2015phase, bendory2017fourier, grohs2020phase, fannjiang2020numerics, bendory2022algebraic}.  

\textbf{Multi-reference alignment (MRA) and cryo-EM.}
To generalize phase retrieval to compact groups, we introduce the multi-reference alignment (MRA) problem~\cite{bandeira2014multireference}. Let \( G \) be a compact group acting on a vector space \( V \). Each MRA observation \( y \) is modeled as:  
\begin{eqnarray} \label{eq:mra}  
    y = g \cdot x + \varepsilon,  
\end{eqnarray}  
where \( g \in G \), \( \varepsilon \sim \mathcal{N}(0, \sigma^2 I) \) is Gaussian noise independent of \( g \), \( \cdot \) denotes the group action, and \( x \in V \) is the signal of interest.  We assume that \( g \) is uniformly distributed over \( G \) with respect to the Haar measure. The objective is to estimate the signal \( x \in V \) from \( n \) realizations given by:  
\begin{equation}  
    y_i = g_i \cdot x + \varepsilon_i, \quad i = 1, \ldots, n.  
\end{equation}

The results presented in this paper apply to any MRA model where a compact group \( G \) acts on a finite-dimensional space~\( V \). 
Notable examples include the group of two-dimensional rotations, \( \mathrm{SO}(2) \), acting on band-limited images~\cite{bandeira2020non, janco2022accelerated}, and the group of three-dimensional rotations, \( \mathrm{SO}(3) \), acting on band-limited signals defined on the sphere~\cite{bandeira2023estimation}. The latter occurs in cryo-electron microscopy and cryo-electron tomography~\cite{bendory2020single,watson2024advances}.

\textbf{The second moment of MRA.}
The underlying idea of the classical method of moments is that, with sufficient samples, the moments of the observed data can be used to approximate the corresponding moments of the unknown signal accurately. Recent works have focused on the second moment of the MRA model, expressed as (a bias term is omitted):  
\begin{equation}  
    \mathbb{E}[yy^*] = \int_G (g \cdot x)(g \cdot x)^* \, dg \approx \frac{1}{n} \sum_{i=1}^n y_i y_i^*.  
\end{equation}  
When \( n = \omega(\sigma^4) \), the right-hand side almost surely provides an accurate approximation of the second moment.  

There are several reasons to focus on the second moment, motivating this paper. First, it has been shown that the highest-order moment necessary to recover a signal determines the sample complexity of the MRA model in high-noise regimes~\cite{abbe2018estimation}. Since generic signals can often be determined from the third moment, this leads to a sample complexity of \( \omega(\sigma^6) \)~\cite{bandeira2023estimation, bendory2017bispectrum, perry2019sample}. Consequently, the number of samples required to achieve a desired error increases rapidly with noise levels. This observation motivates identifying classes of signals that can be recovered using only the second moment, for which \( \omega(\sigma^4) \) samples suffice for accurate recovery.  
In addition, the second moment has a lower dimension than the higher moments, reducing storage requirements and computational complexity.  
Finally, as introduced in~\cite{bendory2023autocorrelation} and explained later in this paper, the algebraic structure of the second moment enables a natural generalization of effective phase retrieval algorithms.  

\textbf{Cryo-EM.}
Single-particle cryo-electron microscopy (cryo-EM) is a groundbreaking technology in structural biology, enabling the elucidation of the structure and dynamics of biological molecules. 
Under certain simplifications, cryo-EM observations can be modeled as~\cite{bendory2020single, singer2020computational}:  
\begin{eqnarray} \label{eq:cryoEM}  
    y = P(g \cdot x) + \varepsilon,  
\end{eqnarray}  
where \( P \) is a tomographic projection, \( G \) represents the group of 3D rotations, and \( x \) denotes the electrostatic potential of the molecule.  Remarkably, the second moment of~\eqref{eq:cryoEM} is invariant to the tomographic projection~\cite{kam1980reconstruction, bendory2024sample}. Therefore, from the perspective of the second moment analysis discussed in this paper, the cryo-EM model can be viewed as a special case of the MRA model~\eqref{eq:mra}.

\textbf{This paper.} 
This paper's primary objective is to introduce the generalized phase retrieval problem over compact groups by analyzing the second moment in the MRA problem. As special cases, we provide a more detailed discussion of two key applications in structural biology: X-ray crystallography and cryo-EM.  
Through the lens of representation theory, we demonstrate in Theorem~\ref{thm:second_moment} and Corollary~\ref{cor:ambiguity} that the second moment determines the signal up to a set of unitary matrices. The dimension of this set is governed by the decomposition of the signal space into irreducible representations of the group.  
To recover the missing unitary matrices, we focus on the role of semialgebraic priors, encompassing crucial signal processing priors such as linear priors, sparsity, and deep generative models. 
We state a general transversality result for intersections of semialgebraic sets with orbits of compact groups, Theorem~\ref{thm:semialgebraic}, and explore its implications for specific cases, such as phase retrieval and cryo-EM. Finally, we show how classical phase retrieval algorithms can be naturally extended to compact groups and propose a conjecture regarding the stability of the problem.

\section{The generalized phase retrieval problem} \label{sec:second_moment}

\subsection{The second moment of the MRA problem}

A general finite-dimensional representation of a compact group can be decomposed as 
\begin{equation} \label{eq:V}
	V = \bigoplus_{\ell = 1}^L V_\ell^{\oplus R_\ell},
\end{equation}
where \( V_\ell \) are distinct (non-isomorphic) irreducible representations of \( G \), each with dimension \( N_\ell \). 
An element \( x \in V \) has a unique \( G \)-invariant decomposition as a sum:
\begin{equation} \label{eq.decomp}
	x = \sum_{\ell = 1}^L \sum_{i=1}^{R_\ell} x_\ell[i],
\end{equation} 
where \( x_\ell[i] \) belongs to the \( i \)-th copy of \( V_\ell \).

Let $X_\ell\in\C^{N_\ell\times R_\ell}$ be  a matrix whose columns are $x_\ell[i]$.
In~\cite{bendory2024sample}, we have proven the following result. 
\begin{theorem} \label{thm:second_moment}
   The second moment of the MRA model~\eqref{eq:mra} provides  the tuple of Gram matrices $X_\ell^*X_\ell\in\C^{R_\ell\times R_\ell}$ for $\ell=1,\ldots,L$.
   \end{theorem}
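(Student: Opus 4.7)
My approach would be to expand the integrand $(g\cdot x)(g\cdot x)^*$ blockwise using the isotypic decomposition~\eqref{eq.decomp} and then evaluate each resulting block integral via Schur orthogonality. Since the action of $g$ preserves each isotypic component and sends $x_\ell[i]\mapsto \rho_\ell(g)\,x_\ell[i]$, where $\rho_\ell$ denotes the irreducible representation on $V_\ell$, I would first rewrite
\begin{equation*}
\int_G (g\cdot x)(g\cdot x)^*\,dg \;=\; \sum_{\ell,\ell'}\sum_{i,i'} \int_G \rho_\ell(g)\, x_\ell[i]\, x_{\ell'}[i']^*\, \rho_{\ell'}(g)^*\, dg.
\end{equation*}
The whole computation thus reduces to evaluating matrix-coefficient integrals of the form $\int_G \rho_\ell(g)\,A\,\rho_{\ell'}(g)^*\,dg$.

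\textbf{Key step.} Schur orthogonality is what does the work. The off-type integrals with $\ell\neq\ell'$ vanish because $V_\ell$ and $V_{\ell'}$ are non-isomorphic irreducibles. The on-type integrals satisfy the standard intertwining identity
\begin{equation*}
\int_G \rho_\ell(g)\, A\, \rho_\ell(g)^*\, dg \;=\; \frac{\trace(A)}{N_\ell}\, I_{N_\ell},
\end{equation*}
which follows from Schur's lemma applied to the $G$-equivariant operator on the left and then taking traces. Substituting $A = x_\ell[i]\,x_\ell[i']^*$ gives $\trace(A) = x_\ell[i']^* x_\ell[i] = (X_\ell^* X_\ell)_{i',i}$, so the $(\ell,i)$–$(\ell,i')$ sub-block of the second moment equals $\tfrac{1}{N_\ell}(X_\ell^* X_\ell)_{i',i}\, I_{N_\ell}$. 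Consequently the second moment is block-diagonal across the isotypic decomposition, and its $\ell$-th isotypic block encodes precisely the Gram matrix $X_\ell^* X_\ell$ (tensored with $I_{N_\ell}$), from which the tuple $(X_\ell^* X_\ell)_{\ell=1}^L$ can be directly read off.

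\textbf{Main obstacle.} The main difficulty is really bookkeeping rather than analysis. One must consistently view the $\ell$-th isotypic component in both of its natural guises---as a column-stacked vector in $\C^{N_\ell R_\ell}$ on which $g$ acts through $\rho_\ell(g)\otimes I_{R_\ell}$, and as the matrix $X_\ell\in\C^{N_\ell\times R_\ell}$ with columns $x_\ell[i]$---and reconcile the two so that the invariant produced by Schur averaging is identified as $X_\ell^* X_\ell$ and not $X_\ell X_\ell^*$. A secondary subtlety is that the clean intertwining identity above assumes irreducibility over $\C$; if one treats representations irreducible only over $\R$, the dimensional constant $N_\ell$ must be replaced by the dimension of the commutant $\Hom_G(V_\ell,V_\ell)$, though this does not affect the conclusion that the $G$-invariant content of the second moment is exactly the tuple of Gram matrices.
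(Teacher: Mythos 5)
Your proposal is correct and follows essentially the same route as the paper: both arguments observe that averaging over $G$ produces a $G$-equivariant operator, invoke Schur's lemma to kill the off-isotypic blocks and force the on-isotypic blocks to be scalar, and identify the scalars as $\langle x_\ell[i], x_\ell[j]\rangle / N_\ell$ by a trace computation. Your explicit matrix-coefficient integral $\int_G \rho_\ell(g) A \rho_\ell(g)^* \, dg = \frac{\trace(A)}{N_\ell} I_{N_\ell}$ is just a concrete restatement of the paper's identification of the second moment as an invariant element of $\Hom(V,V)$.
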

   \begin{proof}[Sketch of the proof] For $x \in V$ and $g \in G$, the outer product $(g \cdot x) (g \cdot x)^*$ can be identified with the tensor $(g \cdot x) \otimes \overline{(g \cdot x)} \in V \otimes V^*$. Averaging over the group $G$ means that the second moment is an invariant element of $V \otimes V^*$, or, equivalently,
   a $G$-invariant element of $\Hom(V,V)$. By definition, a $G$-invariant element $\phi \in \Hom(V,V)$ is a $G$-equivariant linear transformation $V \to V$ meaning that 
   $\phi(g \cdot v) = g \cdot \phi (v)$ for any $v \in V$.
  By Schur's lemma, if $V_\ell$ and $V_m$ are two irreducible representations of $G$, then a $G$-equivariant linear transformation $V_\ell \to V_m$ is zero if $\ell \neq m$ and a scalar multiple of the identity if $\ell = m$. This implies that the second moment can be viewed
   as a linear transformation
   $\sum_{\ell =1}^L \sum_{i,j = 1}^{R_\ell} c_{i,j}I_{\ell,i,j}$, where $I_{\ell,i,j}$ denotes the identity operator from the $i$-th copy of $V_\ell$ to the
   $j$-th copy of $V_\ell$. A trace calculation shows that
   $c_{i,j} = \frac{\langle x_\ell[i], x_\ell[j] \rangle}{N_\ell}$. Hence, the second moment determines the
   Gram matrices $X_\ell^* X_\ell$.
\end{proof}
The following is an important consequence of Theorem~\ref{thm:second_moment}.
\begin{corollary}\label{cor:ambiguity}
    	Let $V$ be of the form~\eqref{eq:V}. Then, a vector $x\in V$ is determined from the second moment up to the action
	of the ambiguity group $H = \prod_{\ell = 1}^L{U(N_\ell)}$.
\end{corollary}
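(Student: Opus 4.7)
The plan is to combine Theorem~\ref{thm:second_moment} with the elementary linear-algebraic fact that two matrices have the same Gram product if and only if they differ by a left unitary factor. This reduces the corollary to a componentwise statement that can be handled isotypic component by isotypic component.

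First I would pin down how $H = \prod_{\ell=1}^L U(N_\ell)$ acts on $V$: an element $(U_1,\ldots,U_L) \in H$ acts on the $\ell$-th isotypic component $V_\ell^{\oplus R_\ell}$ by applying $U_\ell$ simultaneously to each of the $R_\ell$ copies, which in terms of the matrix $X_\ell \in \C^{N_\ell \times R_\ell}$ amounts to the left multiplication $X_\ell \mapsto U_\ell X_\ell$. The easy direction is then automatic: since $(U_\ell X_\ell)^*(U_\ell X_\ell) = X_\ell^* X_\ell$, the $H$-action preserves the tuple of Gram matrices and hence, by Theorem~\ref{thm:second_moment}, the second moment.

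For the converse, suppose $x, y \in V$ yield the same second moment. By Theorem~\ref{thm:second_moment} the associated matrices satisfy $X_\ell^* X_\ell = Y_\ell^* Y_\ell$ for every $\ell$. I would then invoke the standard fact that if $X, Y \in \C^{n \times r}$ satisfy $X^* X = Y^* Y$, then $Y = U X$ for some $U \in U(n)$. The quickest argument is that the hypothesis means $\langle Xv, Xw \rangle = \langle Yv, Yw \rangle$ for all $v, w \in \C^r$, so the correspondence $Xv \mapsto Yv$ is a well-defined linear isometry from $\im(X)$ onto $\im(Y)$; extending it to a unitary of $\C^n$ produces the desired $U$. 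Applied with $n = N_\ell$ and $r = R_\ell$ for each $\ell$, this yields an element $(U_1,\ldots,U_L) \in H$ with $Y_\ell = U_\ell X_\ell$, that is, $y = (U_1,\ldots,U_L) \cdot x$.

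I do not anticipate a substantive obstacle. The only subtlety worth flagging is that when $\im(X_\ell) \subsetneq \C^{N_\ell}$ the extension of the isometry to a full unitary of $\C^{N_\ell}$ is non-unique, but since the corollary asserts only the \emph{existence} of an $H$-element sending $x$ to $y$, this non-uniqueness is harmless.
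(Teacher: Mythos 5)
Your proof is correct and follows the route the paper intends: Corollary~\ref{cor:ambiguity} is presented as an immediate consequence of Theorem~\ref{thm:second_moment}, and your argument---that equality of the Gram matrices forces $Y_\ell = U_\ell X_\ell$ by extending the isometry $X_\ell v \mapsto Y_\ell v$ from $\im(X_\ell)$ to a unitary of $\C^{N_\ell}$---is exactly the standard way to fill in that deduction, including the correct handling of the rank-deficient case. The only point worth noting is that your ``easy direction'' needs the second moment to be \emph{completely determined by} the Gram matrices (not merely to determine them, as the theorem statement literally says); this follows from the trace computation in the proof sketch of Theorem~\ref{thm:second_moment}, which exhibits the second moment as $\sum_{\ell,i,j} \frac{\langle x_\ell[i], x_\ell[j]\rangle}{N_\ell} I_{\ell,i,j}$, a function of the Gram matrix entries alone.
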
 

Recovering the missing unitary matrices is called the \emph{generalized phase retrieval problem}.

\subsection{Examples} 
\emph{Phase retrieval.} 
Consider the group \( G = \mathbb{Z}_N \) acting on \( \mathbb{C}^N \) by cyclic shifts. In the Fourier domain, the cyclic group \( G = \mathbb{Z}_N \) acts by multiplication by roots of unity. Specifically, we identify \( \mathbb{Z}_N \) with \( \mu_N \), where \( \mu_N \) denotes the \( N \)-th roots of unity. If \( \omega \in \mu_N \), then 
\begin{equation*}\label{eq:fourier_action}
	\omega \cdot (x[0], \ldots, x[N-1]) = (x[0], \omega x[1], \ldots, \omega^{N-1} x[N-1]).
\end{equation*}
The vector space \( \mathbb{C}^N \), under this action of \( \mu_N \), decomposes into a sum of one-dimensional irreducible representations, with \( N_\ell = R_\ell = 1 \), so that \( N = L \).
The second moment of a vector \( x \in \mathbb{C}^N \) in the Fourier domain is the power spectrum \( (|x[0]|^2, \ldots, |x[N-1]|^2) \). This determines the vector up to the action of the group \( (S^1)^N \), since \( U(1) = S^1 \).

\emph{Cryo-EM.}
Let \( L^2(\mathbb{R}^3) \) be the Hilbert space of complex-valued \( L^2 \) functions on \( \mathbb{R}^3 \). In cryo-EM, we are interested in the action of \( SO(3) \) on the subspace of \( L^2(\mathbb{R}^3) \) corresponding to the Fourier transforms of real-valued functions on \( \mathbb{R}^3 \), which represent the Coulomb potential of an unknown molecular structure.
Using spherical coordinates \( (r, \theta, \phi) \), we consider a finite-dimensional approximation of \( L^2(\mathbb{R}^3) \) by discretizing \( x(r, \theta, \phi) \) with \( R \) samples \( r_1, \ldots, r_{R} \) of the radial coordinates and bandlimiting the corresponding spherical functions \( x(r_i, \theta, \phi) \). This is a standard assumption in the cryo-EM literature (e.g.,~\cite{bandeira2020non}).
Mathematically, this means that we approximate the infinite-dimensional representation \( L^2(\mathbb{R}^3) \) with the finite-dimensional representation
$V = \left(\bigoplus_{\ell = 0}^L V_\ell\right)^R,$
where \( L \) is the bandlimit, and \( V_\ell \) is the \( (2\ell + 1) \)-dimensional irreducible representation of \( SO(3) \), corresponding to harmonic polynomials of frequency \( \ell \). 
An orthonormal basis for \( V_\ell \) is the set of spherical harmonic polynomials \( \{Y_\ell^m(\theta, \phi)\}_{m = -\ell}^\ell \) and the dimension of this representation is \( R(L^2 + 2L + 1) \).

Viewing an element of $V$ as a radially discretized function on $\R^3$, 
we can view
$x \in V$ as an $R$-tuple
$x = (x[1], \ldots , x[R]),$
where
$x[r] \in L^2(S^2)$ is an $L$-bandlimited function.
Each $x[r]$ can be expanded in terms of the basis functions $Y_\ell^m(\theta, \varphi)$ as follows 
\begin{equation} \label{eq.function}
	x[r] = \sum_{ \ell=0}^L\sum_{m=-\ell}^\ell X_{\ell}^m[r]
	Y_{\ell}^m.
\end{equation}
Therefore, the problem of determining a structure reduces to determining the unknown coefficients $X_\ell^m[r]$ in \eqref{eq.function}.
Since the tomographic projection in the cryo-EM model does not affect the second moment~\cite{kam1980reconstruction,bendory2024sample}, we can invoke Theorem~\ref{thm:second_moment} to conclude that the second moment determines the matrices
\begin{equation*} \label{eq:Bl}
	X_\ell^* X_\ell, \quad  \ell = 0, \ldots L, \quad X_{\ell} = \left(X_{\ell}^m[r_i]\right)_{m=-\ell, \ldots , \ell , i = 1, \ldots R}.
\end{equation*}

\section{Semi-algebraic priors}

 A semialgebraic set $\M \subseteq \R^N$ is a finite union of sets defined by polynomial equality and inequality constraints. 
Any semialgebraic set~$\M$ can be written as a finite union of smooth manifolds, and the dimension of~$\M$ is defined as the maximal dimension of these manifolds.
The assumption that a signal lies in a semialgebraic set is called a \emph{semialgebraic prior}. This work is motivated by three important special cases of semialgebraic sets.

\begin{itemize}
    \item \emph{Linear priors.} The assumption that the signal lies in some low-dimensional subspace, as in PCA.

\item \emph{Sparse priors.} The assumption that the signal can be represented by only a few coefficients under suitable dictionary~\cite{elad2010sparse}.

\item \emph{Deep generative models.} The ubiquitous assumption that the signal lies in the image of a neural network of the form
$x=A_\ell\circ \eta_{\ell-1} \circ A_{\ell-1}\circ \ldots \circ \eta_1 \circ A_1(z),$
where $z$ resides in a low dimensional (latent) space, the $A_i$'s are affine transformations, and the $\eta_i$'s are semialgebraic activation functions (e.g., ReLU). 
\end{itemize}

\subsection{Transversality theorem}
We state a transversality theorem for semialgebraic sets in orthogonal representations \( V \) of a compact group \( H \). We provide bounds on the dimension of a generic semialgebraic set \( \mathcal{M} \), ensuring that it is transverse to the orbits of the group action. In other words, the \( H \)-orbit of any point \( x \in \mathcal{M} \) intersects \( \mathcal{M} \) only at \( x \).
This implies that if \( \Psi \) is a measurement function (such as the second moment) that separates \( H \)-orbits, then \( \Psi \) is one-to-one when restricted to \( \mathcal{M} \). Consequently, the prior knowledge that \( x \) lies in \( \mathcal{M} \) ensures that \( x \) is uniquely determined by the measurement \( \Psi(x) \).

The following result is formulated in terms of two parameters: the dimension of the semialgebraic set \( \mathcal{M} \) and the effective dimension of the representation \( K \), defined as the dimension of the representation minus the maximum dimension of the orbits:
\begin{equation} \label{eq:K}
    K = \dim V - k(H),
\end{equation}
where 
\( k(H) = \max_{x \in V} \dim Hx \). 

Clearly, the problem becomes easier when \( \mathcal{M} \) is of a lower dimension (a smaller semialgebraic set) and \( K \) is larger (a greater effective dimension of the representation). We prove that the gap between \( K \) and \( \mathcal{M} \) can be small. A detailed formulation of the theorem (which also holds for unitary representations) can be found in~\cite{bendory2024transversality}. When we say that a transversality statement holds for \( \mathrm{GL}(V) \)-generic semialgebraic sets of a given dimension, we mean that if \( \mathcal{M} \) is \emph{any} semialgebraic set of the given dimension, then the transversality statement holds for the translate of that semialgebraic set by a generic linear transformation \( A \in \mathrm{GL}(V) \).

\begin{theorem}\label{thm:semialgebraic}
Let $V$ be an orthogonal representation of a compact Lie group $H$. Let $\M \subseteq V$ be a $\GLV$-generic semi-algebraic set of dimension $M$,
and 
let $K$ be as in~\eqref{eq:K}.
If $K>M$, then for a generic $x \in \M$ if $h \cdot x \in M$ for some $h \in H$, then 
 $h \cdot x = \pm x$. If $K>2M$, then it holds for all $x$.
\end{theorem}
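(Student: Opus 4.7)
The plan is to recast the conclusion as a dimension bound on the projection of a semialgebraic intersection in $V \times V$ and control that dimension via parametric transversality, using $A \in \GLV$ as the parameter.

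\emph{Setting up the orbit graph.} Consider the continuous action map $a : H \times V \to V \times V$, $a(h,x) = (x, h \cdot x)$, with image
$\Gamma = \{(x, h \cdot x) : x \in V,\ h \in H\}.$
A generic fiber of $a$ is a coset of a stabilizer of dimension $\dim H - k(H)$, so $\dim \Gamma = \dim V + k(H)$. Set $\Delta_\pm = \{(x, \pm x) : x \in V\}$ and $\Gamma^{*} = \Gamma \setminus (\Delta_+ \cup \Delta_-)$. The bad set to be controlled is exactly the first-coordinate projection of $(A\mathcal{M} \times A\mathcal{M}) \cap \Gamma^{*}$. Crucially, for an \emph{orthogonal} representation, every pair $(x, h \cdot x) \in \Gamma^{*}$ consists of linearly independent vectors: since $\|h \cdot x\| = \|x\|$, any scalar relation $h \cdot x = cx$ forces $c = \pm 1$, which is excluded by the removal of $\Delta_\pm$.

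\emph{Transversality and dimension count.} Stratify $\mathcal{M}$ into finitely many smooth semialgebraic submanifolds of dimension at most $M$, and consider the smooth map $F : \GLV \times V \times V \to V \times V$, $F(A, x, y) = (Ax, Ay)$. The derivative of $F$ in the $A$-direction at $(A, x, y)$ is $B \mapsto (Bx, By)$, which surjects onto $V \times V$ whenever $x, y$ are linearly independent --- i.e., at every preimage of $\Gamma^{*}$. Hence $F$ is transverse to each stratum of $\Gamma^{*}$, and the parametric transversality theorem produces a generic $A \in \GLV$ for which $(A\mathcal{M} \times A\mathcal{M})$ meets $\Gamma^{*}$ transversally, giving
$\dim \bigl((A\mathcal{M} \times A\mathcal{M}) \cap \Gamma^{*}\bigr) \leq 2M + (\dim V + k(H)) - 2 \dim V = 2M - K.$
Projecting to the first factor, the bad set $\mathcal{B}_A \subseteq A\mathcal{M}$ has dimension at most $2M - K$. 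When $K > M$, this is strictly below $\dim A\mathcal{M} = M$, so a generic $x \in A\mathcal{M}$ avoids $\mathcal{B}_A$; when $K > 2M$, the dimension is negative, so $\mathcal{B}_A = \emptyset$ and the conclusion holds for every $x$.

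\emph{Main obstacle.} The delicate ingredient is the transversality step, because the family $\{A\mathcal{M} \times A\mathcal{M}\}$ is parameterized by the diagonal $\GLV$-action on $V \times V$, not by the richer $\GLV \times \GLV$ action. The orthogonality hypothesis is used precisely to guarantee that $\Gamma^{*}$ lies over linearly independent pairs, which is exactly the condition under which the diagonal $\GLV$-action is locally surjective on $V \times V$ and thus rich enough to apply parametric transversality. Uniformizing over the finite stratifications of $\mathcal{M}$ and of $\Gamma$ is routine; loci of strictly lower orbit dimension contribute only lower-dimensional corrections and do not affect the final bound.
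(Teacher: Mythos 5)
Your argument is correct and, at bottom, performs the same dimension count as the paper: both bound the incidence set $\{(A,x,y): x,y\in\M,\ A\cdot x\in H(A\cdot y),\ A\cdot y\neq \pm A\cdot x\}$ by $\dim\GLV + 2M - K$ and then project to $\GLV$. Your observation that orthogonality forces every pair in $\Gamma^{*}$ to consist of linearly independent vectors is exactly the paper's disposal of the parallel case (where $|Ax|=|Ay|$ forces $y=\pm x$), and the surjectivity of $B\mapsto (Bx,By)$ for independent $x,y$ is the differential version of the paper's remark that the first two rows of the matrix of $A$ in a basis starting with $x,y$ can be prescribed freely, so that the condition $Ax\in H(Ay)$ confines $Ax$ to a set of dimension at most $k(H)$ and cuts $\dim\GLV(x,y)$ down by $K$. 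The only substantive difference is the machinery used to pass from this incidence bound to a statement about generic $A$: you invoke Thom's parametric transversality theorem (hence Sard), which yields the conclusion for almost every $A$ in the measure-theoretic sense, whereas the paper's Fiber Lemma is the semialgebraic fiber-dimension theorem, which excludes a semialgebraic subset of $\GLV$ of positive codimension. Since everything in play is semialgebraic (this uses the standard fact that a compact Lie group acting orthogonally acts algebraically, so $\Gamma$ is semialgebraic by Tarski--Seidenberg --- a point worth stating explicitly), the two notions of genericity coincide here; but to get the paper's stronger form of ``generic $A$'' directly, you should run your fiber-dimension count in the semialgebraic category rather than through Sard, which is precisely what the cited Fiber Lemma does. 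The remaining technicalities you wave at (stratifying $\M$ and $\Gamma^{*}$, the lower-dimensional contribution of non-principal orbits to $\dim\Gamma$, and the fact that semialgebraic projections do not raise dimension) are indeed routine.
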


\begin{proof}[Sketch of the proof]
For each pair of vectors $x,y \in V$, set
$\GLV(x,y) = \{ A\in \GLV| A \cdot x \in H( A\cdot y)\}.$ With this notation, we say that
$x \sim y$ if $\GLV(x,y) =\GLV$ meaning that $A \cdot x$
and $A \cdot y$ lie in the same $H$ orbit for every $A\in \GLV$. 
\begin{lemma}
If $x \not \sim y$, then 
$\dim \GLV(x,y) \leq \dim \GLV - K$. 
\end{lemma}
\begin{proof} If $x, y$ are parallel, then $Ax$ and $Ay$ are parallel for any $A \in \GLV$. Since $H$ acts by orthogonal 
transformations the fact that $Ax$ is in the $H$-orbit
of $Ay$ means that $|Ax| = |Ay|$ so $x,y$ differ by a sign
and are therefore equivalent. Thus, we may assume that
$x$ and $y$ are linearly independent.
Choose an ordered basis $b_1 = x, b_2 = y, \ldots b_N$ for $V$. The matrix of $A \in \GLV$ with respect to this basis has first row $Ax$ and second row
$Ay$. The condition that $Ax$ is in the orbit of $Ay$ implies
that the vector $Ax$ lies in a real algebraic subset
of $V$ of dimension at most $k(H)$.
\end{proof}
Theorem~\ref{thm:semialgebraic} then follows from the following result, which is a special case of the {\em Fiber Lemma}~\cite[Lemma 6.1]{bendory2024transversality}.
\begin{lemma} \label{lemma:fiber}
Assume that $\dim \GL(x,y) \leq \dim \GLV - K$ for all $x \not \sim y$.
Then, for a generic linear transformation $A \in \GLV$, the following hold, where $A(\M)$ denotes the translate of $\M$ by $A$.
\begin{enumerate}
		\item If $K > M$,
		then for a generic vector $x \in A(\M)$  if $y = h \cdot x 
  \in A(\M)$ for some $h \in H$, then $x \sim y$. 
		\item If $K > 2M$, 
 then for any vector $x \in A(\M)$  if $y = h \cdot x 
  \in A(\M)$ for some $h \in H$, then $x \sim y$. 
  	\end{enumerate}
  \end{lemma}
\end{proof}

\begin{corollary}\label{cor:signal_recovery}
Let $V$ be a real representation of a compact group $G$ of the form~\eqref{eq:V} and let $H = \prod_{\ell =1}^L O(N_\ell)$. Let $\M \subseteq V$ be a $\GLV$-generic semi-algebraic set of dimension $M$, and  
$K = \dim V - k(H)$ as given in~\eqref{eq:K}. 
Then, if $K>M$, a generic vector $x \in \M$ is determined (up to a possible sign) by the second moment of~\eqref{eq:mra} with respect to $G$. If $K>2M$, then this holds for every $x$.
\end{corollary}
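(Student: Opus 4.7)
The plan is to combine the two results already at our disposal: Corollary~\ref{cor:ambiguity} and Theorem~\ref{thm:semialgebraic}. Corollary~\ref{cor:ambiguity}, in its real form, says that any two signals in $V$ producing the same second moment differ by the action of the ambiguity group $H = \prod_{\ell=1}^L O(N_\ell)$, while Theorem~\ref{thm:semialgebraic} controls the extent to which an $H$-orbit can meet a $\GLV$-generic semialgebraic set. The proof is therefore a matter of checking that the hypotheses of the transversality theorem are met for this pair $(V, H)$ and then chaining the two conclusions.

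First I would verify that $V$ is indeed an orthogonal representation of $H$. Each factor $O(N_\ell)$ acts on the isotypic component $V_\ell^{\oplus R_\ell}$ by $X_\ell \mapsto U_\ell X_\ell$, which preserves the Frobenius inner product on $\R^{N_\ell \times R_\ell}$; this agrees with the inner product inherited from $V$, so the action is orthogonal. With this in hand, $K = \dim V - k(H)$ is precisely the effective dimension appearing in Theorem~\ref{thm:semialgebraic}, and the hypothesis that $\M$ is a $\GLV$-generic semialgebraic set of dimension $M$ places us squarely in the setting of that theorem.

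Next I would run the chain. Suppose $x, x' \in \M$ produce the same second moment. By Corollary~\ref{cor:ambiguity} there exists $h \in H$ with $x' = h \cdot x$; in particular $h \cdot x \in \M$. If $K > M$ and $x$ is generic in $\M$, Theorem~\ref{thm:semialgebraic} forces $h \cdot x = \pm x$, and hence $x' = \pm x$; if $K > 2M$, the same conclusion holds for every $x \in \M$. Thus the second moment, together with the prior $x \in \M$, determines $x$ up to a global sign, which is the statement of the corollary.

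No step presents a genuine obstacle once the ambiguity group of Corollary~\ref{cor:ambiguity} is identified, in the real setting, with the orthogonal $H$ used in Theorem~\ref{thm:semialgebraic}. The substantive work has already been absorbed into the transversality theorem, whose proof via the Fiber Lemma is sketched above; the corollary is essentially the translation of that statement into the language of moment-based recovery over compact groups. The only care required is to keep the representation-theoretic bookkeeping of the decomposition~\eqref{eq:V} consistent with the block-diagonal action of $H$ on the isotypic components, so that the ``$\GLV$-generic'' hypothesis is inherited by the object to which the transversality theorem is actually applied.
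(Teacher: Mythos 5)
Your argument is correct and is exactly the derivation the paper intends (the paper states the corollary without a written proof, but it is meant to follow by chaining Corollary~\ref{cor:ambiguity} with Theorem~\ref{thm:semialgebraic} applied to the orthogonal ambiguity group $H=\prod_\ell O(N_\ell)$). Your added checks --- that the ambiguity group becomes $\prod_\ell O(N_\ell)$ in the real setting and that its block action on the isotypic components is orthogonal, so the transversality theorem applies --- are precisely the bookkeeping needed.
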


\subsection{Examples} 
We now present the applications of Corollary~\ref{cor:signal_recovery}. 

\begin{corollary}[Phase retrieval] \label{cor:phase_retrieval} 
Consider the phase retrieval problem of recovering a signal $x\in\R^N$ from its power spectrum. 
If $\M$ is a 
$\GLV$-generic semialgebraic set of dimension $M$ with $N \geq 2M$, then the generic vector $x \in \M$ is
determined (up to a sign) from its power spectrum. If $N\geq 4M$, then this holds for every vector.
\end{corollary}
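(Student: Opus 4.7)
The plan is to realize Corollary~\ref{cor:phase_retrieval} as a direct application of Corollary~\ref{cor:signal_recovery} with $G = \Z_N$ acting on $V = \R^N$ by cyclic translations. First I would write out the decomposition of $\R^N$ into real irreducible representations of $\Z_N$: the trivial one-dimensional representation (the DC component), the sign one-dimensional representation when $N$ is even (the Nyquist component), and one two-dimensional representation for each unordered pair $\{k, N-k\}$ of nonzero, non-Nyquist frequencies. Each irreducible appears with multiplicity $R_\ell = 1$, so Theorem~\ref{thm:second_moment} says that the second moment returns the squared norm of each component, which is precisely the information contained in the power spectrum. This identifies phase retrieval with the second-moment problem for $\Z_N$ acting on $\R^N$.

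Next I would compute $K$. Writing the decomposition with $a$ one-dimensional and $b$ two-dimensional constituents, the ambiguity group is $H = \{\pm 1\}^a \times \O(2)^b$ with $a + 2b = N$. The discrete $\{\pm 1\}$ factors contribute nothing to the orbit dimension, while each $\O(2)$ factor has generic orbit dimension $1$ on its two-dimensional summand. Hence $k(H) = b$ and
\[
K = N - b = a + b = \begin{cases} (N+1)/2, & N \text{ odd},\\ (N+2)/2, & N \text{ even},\end{cases}
\]
so $K \geq (N+1)/2$ in both parities.

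Finally I would translate the hypotheses: $N \geq 2M$ forces $K \geq (N+1)/2 > M$, and $N \geq 4M$ forces $K \geq (N+1)/2 > 2M$. Invoking the two conclusions of Corollary~\ref{cor:signal_recovery} then gives the generic and universal versions stated in Corollary~\ref{cor:phase_retrieval}. No single step is a genuine obstacle; the only care needed is the parity case split, which could otherwise produce an off-by-one mismatch between the integer hypotheses $N \geq 2M$, $N \geq 4M$ and the strict inequalities $K > M$, $K > 2M$ appearing in Corollary~\ref{cor:signal_recovery}.
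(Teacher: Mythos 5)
Your proposal is correct and follows exactly the route the paper intends: Corollary~\ref{cor:phase_retrieval} is stated as a direct application of Corollary~\ref{cor:signal_recovery}, with the $\Z_N$-decomposition and the bound $K\geq (N+1)/2$ supplying the hypotheses $K>M$ and $K>2M$. Your only addition is working out the \emph{real} irreducible decomposition (DC, Nyquist, and the two-dimensional frequency pairs) rather than the complex one given in the paper's example, and the resulting computation $k(H)=b$, $K=N-b$ is exactly right, including the parity bookkeeping.
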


For the cryo-EM model, if we assume that $R \geq 2L+1$, then the orbits of the
action of $H = \prod_{\ell =0}^L O(2\ell +1)$ have full dimension,
so $k(H) =\sum_{\ell=0}^L\binom{2\ell +1}{2}\approx\frac{2L^3}{3}.$ A precise statement of the following corollary is provided in~\cite{bendory2024transversality}.

\begin{corollary}[Cryo-EM] \label{cor:cryoEM}
Consider the cryo-EM model described above, 
where the signal is taken from the representation 
 $V = \oplus_{\ell =0}^{L} V_\ell^{\oplus R}$ of $\SO(3)$ with $R \geq 2L+1$. Let $\M$ be a $\GLV$-generic semi-algebraic subset of dimension $M$ and 
let $K\approx L^2\left(R+\frac{2L}{3}\right).$ 
If $K>M$, then a generic $x \in \M$ is determined (up to a sign) by its second moment.
Likewise,  if $K>2M$, then this holds for every $x$.
   \end{corollary}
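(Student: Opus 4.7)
The plan is to realize Corollary~\ref{cor:cryoEM} as a direct specialization of Corollary~\ref{cor:signal_recovery} to the representation $V = \bigoplus_{\ell=0}^{L} V_\ell^{\oplus R}$ of $G = \SO(3)$. Since the cryo-EM second moment agrees with the MRA second moment by the tomographic invariance recalled earlier in the paper, Corollary~\ref{cor:signal_recovery} applies once its hypotheses are verified in this setting; the two thresholds $K>M$ and $K>2M$ then translate immediately into the ``generic $x$'' and ``every $x$'' assertions. The task therefore reduces to computing $\dim V$ and the maximal orbit dimension $k(H)$.

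In this setting $N_\ell = 2\ell+1$ and $R_\ell = R$, so Corollary~\ref{cor:signal_recovery} tells us to take $H = \prod_{\ell=0}^{L} \O(2\ell+1)$, and
\[
\dim V \;=\; R \sum_{\ell=0}^{L}(2\ell+1) \;=\; R(L+1)^2.
\]
For $k(H) = \max_{x \in V}\dim Hx$, I would exploit the blockwise action of $H$: the orbit dimension through $x = (X_0,\dots,X_L)$ decomposes as a sum over $\ell$ of the orbit dimensions of the matrices $X_\ell \in \R^{(2\ell+1)\times R}$ under left multiplication by $\O(2\ell+1)$. This is where the hypothesis $R \geq 2L+1$ enters. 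When $R \geq 2\ell+1$, a generic $X_\ell$ has rank $2\ell+1$, so its columns span $\R^{2\ell+1}$, forcing any $A \in \O(2\ell+1)$ with $AX_\ell = X_\ell$ to fix a spanning set and hence equal the identity. Thus the generic orbit at level $\ell$ is full-dimensional of size $\dim \O(2\ell+1) = \binom{2\ell+1}{2}$, and summing over $\ell$ gives $k(H) = \sum_{\ell=0}^{L}\binom{2\ell+1}{2}$.

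Substituting into~\eqref{eq:K} yields $K = R(L+1)^2 - \sum_{\ell=0}^{L}\binom{2\ell+1}{2}$, whose leading behavior for large $L$ matches the asymptotic quoted in the corollary. Both assertions then follow from the corresponding clauses of Corollary~\ref{cor:signal_recovery} applied to a generic $\GLV$-translate of $\M$. There is no substantive obstacle here: the only piece not already packaged in Corollary~\ref{cor:signal_recovery} is the orbit-dimension calculation, which reduces to the standard fact that the $\O(N)$-stabilizer of a matrix in $\R^{N\times R}$ whose columns span $\R^N$ is trivial, and the hypothesis $R\geq 2L+1$ is precisely what guarantees this at every level $\ell$.
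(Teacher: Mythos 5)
Your proposal is correct and follows essentially the same route as the paper: the paper also obtains this corollary by specializing Corollary~\ref{cor:signal_recovery} to $H=\prod_{\ell=0}^{L}\O(2\ell+1)$, using $R\ge 2L+1$ to guarantee full-dimensional orbits and hence $k(H)=\sum_{\ell=0}^{L}\binom{2\ell+1}{2}\approx \tfrac{2L^3}{3}$, with the remaining details deferred to~\cite{bendory2024transversality}. One caveat: your exact value $K=R(L+1)^2-\sum_{\ell=0}^{L}\binom{2\ell+1}{2}$ has leading behavior $L^2\left(R-\tfrac{2L}{3}\right)$, not $L^2\left(R+\tfrac{2L}{3}\right)$ as printed in the corollary (the printed expression would exceed $\dim V$, which is impossible for $K=\dim V-k(H)$); the sign in the statement appears to be a typo, so your remark that your formula ``matches the asymptotic quoted'' should instead note this correction.
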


\section{Algorithms}
\textbf{Classical phase retrieval algorithms.}
The most effective phase retrieval algorithms are based on two projection operators. In X-ray crystallography, \( P_1 \) projects onto the best sparse signal approximation, while \( P_2 \) projects onto signals matching the measured power spectrum. In other contexts, \( P_1 \) adapts based on the prior; for instance, if the signal's support is known, \( P_1 \) zeros out values outside this support. The most straightforward algorithm, alternating projection, iterates as \( x \rightarrow P_2P_1(x) \). More advanced algorithms have since emerged, such as RAAR, HIO, and RRR~\cite{elser2003phase, luke2004relaxed, fienup1982phase}. While the properties of these algorithms for non-convex problems remain unclear~\cite{elser2018benchmark}, these algorithms effectively tackle complex computational challenges~\cite{elser2007searching}. 

\textbf{Generalizing phase retrieval algorithms.} 
Classical phase retrieval algorithms can be extended for signal recovery from the second moment of MRA and cryo-EM models, as demonstrated in~\cite{bendory2023autocorrelation}. This involves generalizing \( P_2 \) from retrieving the set of missing phases to unitary matrices. The concept is relatively straightforward: suppose we are given the Gram matrix \( X_{\ell}^T X_{\ell} \) and a current estimate of the matrix \( \tilde{X}_{\ell} \). The projection of \( \tilde{X}_{\ell} \) onto the set of matrices with the same Gram matrix is obtained by solving the Procrustes problem:
$\arg\min_{O_\ell} \|\tilde{X}_{\ell} - O_\ell X_{\ell} \|_{\text{F}}^2,$ where $O_\ell$ is an orthogonal matrix.
This problem can be efficiently solved using SVD.
The algorithm alternates between projecting the signal onto constraints defined by priors and by the second moment of the observable images. This approach generalizes crystallographic phase retrieval to any compact group and allows for the integration of additional prior knowledge through efficient projection operators. 

\textbf{Numerical experiments.}
To numerically examine the behavior of the proposed algorithm, we conducted the following experiments. We generated a matrix \( X \) of size \( 8 \times 4 \) that lies in a low-dimensional subspace of dimension $K$ and whose entries are drawn independently of a normal distribution. The objective is to recover \( X \) from its Gram matrix \( X^T X \in \mathbb{R}^{4 \times 4} \) using the alternating projection algorithm and based on the linear prior, starting from an initial guess drawn from the same distribution. 
We report the median error to account for potential convergence issues arising from the non-convexity of the problem. The results are presented in Figure~\ref{fig:experiments}.

The first experiment investigates the number of iterations required to estimate the matrix \( X \) up to a sign. Specifically, we counted the number of iterations needed to achieve a normalized error smaller than \( 10^{-6} \) as a function of \( K \), with the number of iterations capped at 1000. This experiment was repeated 10,000 times for each value of \( K \).
The second experiment examines the normalized recovery error as a function of the noise level. In this case, \( K = 10 \), and we observe \( (X+ \eta)^T(X + \eta) \), where \( \eta \) is a noise matrix with i.i.d.\ normal entries, zero mean, and variance \( \sigma^2 \). From this noisy Gram matrix, we aimed to estimate the ground truth \( X \). 
The recovery error increases smoothly with the noise level.


\begin{figure}[ht]
    \centering
    \subfloat[\label{fig:iterations}]{
    \includegraphics[width=0.45\linewidth]{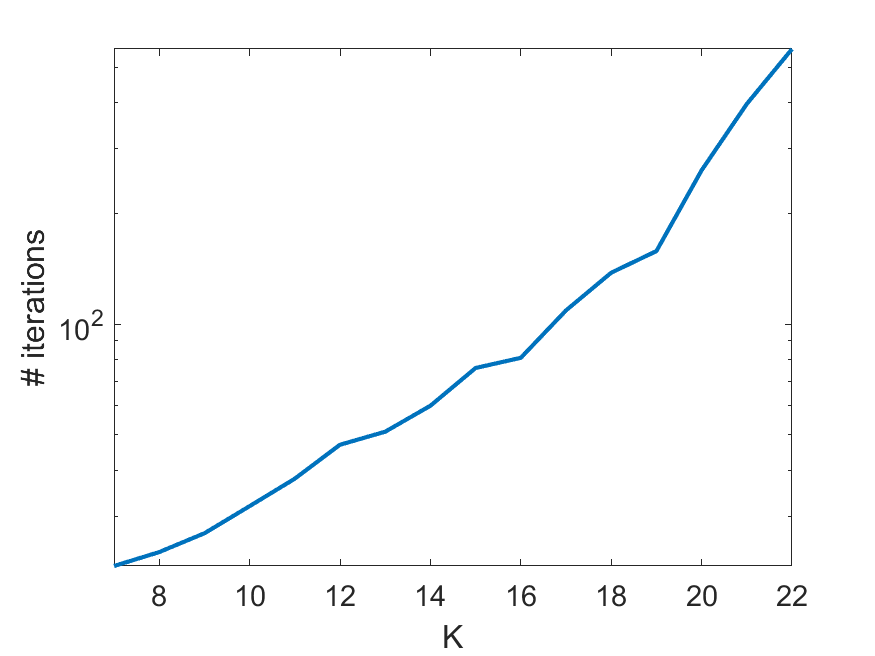}
    }
    \hfill
    \subfloat[\label{fig:error_noise}]{
    \includegraphics[width=0.45\linewidth]{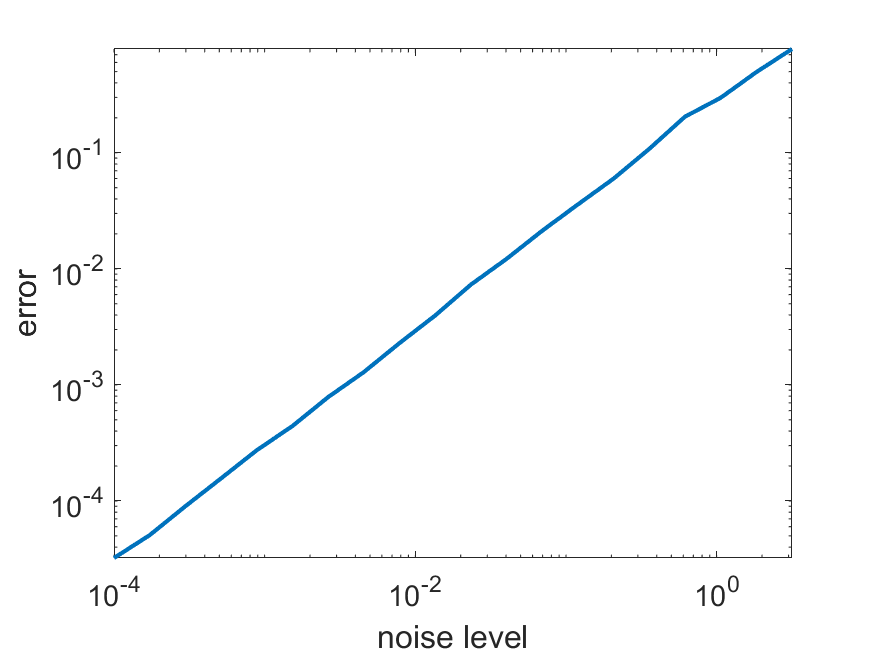}
    }
    \caption{\label{fig:experiments}In both experiments, the sought matrix is of dimension $4\times 8$. (a) The number of iterations required to achieve a recovery error smaller than $10^{-6}$ as a function of the dimensionality of the subspace $K$. (b) The recovery error as a function of the noise level when the matrix lies in a subspace of dimension $K=10$.}
    \label{fig:side_by_side_two_columns} \vspace{-15pt}
\end{figure}



\section{Future directions: bi-Lipschitz analysis}
The numerical results above suggest that, under linear priors, the recovery problem is well-conditioned. This, along with recent results~\cite{balan2016lipschitz, derksen2024bi}, leads us to conjecture that the recovery problem is bi-Lipschitz. Let \( V \) be a real representation of a compact group \( G \) that decomposes as in~\eqref{eq.decomp}, and let \( \mathcal{M} \) be a linear subspace such that the second moment is injective up to a sign. We propose the following conjecture, which generalizes the result of~\cite{balan2016lipschitz} to dimensions greater than one.

\begin{conjecture} \label{conj:bilipschitz}
The map
$ \M \to \R^N = \prod_{i=1}^L\R^{N_\ell \times R_\ell}$
given by 
$$(X_1, \ldots , X_L) \mapsto (\sqrt{X_1^*X_1}, \ldots , \sqrt{X_L^*X_L})
$$
is bi-Lipschitz when $\M$ is given the pseudo-metric
$\rho(x,y) = \min | x \pm  y|$
and $\R^N = \prod_{i=1}^L \R^{N_\ell \times R_\ell}$ is given the Euclidean metric.
\end{conjecture}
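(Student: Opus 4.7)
The plan is to establish the conjecture by separately proving the upper and lower Lipschitz bounds for the map $f(X_1,\dots,X_L) = (\sqrt{X_1^*X_1},\dots,\sqrt{X_L^*X_L})$. The upper bound follows from the Araki--Yamagami--Kittaneh inequality for polar factors, $\bigl\|\sqrt{A^*A}-\sqrt{B^*B}\bigr\| \leq \sqrt{2}\,\|A-B\|$, which holds for any two same-shape matrices in the Frobenius norm. Applying it block by block to $(X_\ell, Y_\ell)$ and to $(X_\ell, -Y_\ell)$, summing in $\ell$, and taking the minimum over the global sign ambiguity yields $\|f(X)-f(Y)\| \leq \sqrt{2}\,\rho(X,Y)$.

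For the lower bound I would argue by contradiction combined with compactness. Because $f$ and $\rho$ are both positively $1$-homogeneous and even in $X$, any hypothetical sequence $(X^{(n)},Y^{(n)}) \in \M\times\M$ violating a lower Lipschitz estimate can be rescaled so that $(X^{(n)},Y^{(n)})$ lies on the unit sphere of $\M\times\M$. Since $\M$ is finite-dimensional, a convergent subsequence $(X^{(n)},Y^{(n)}) \to (X_\infty,Y_\infty)$ exists, and continuity of $f$ and $\rho$ forces $f(X_\infty) = f(Y_\infty)$; the injectivity hypothesis on $\M$ then gives $Y_\infty = \pm X_\infty$. To upgrade this qualitative collapse into a quantitative contradiction, I would perform a first-order blow-up, writing $X^{(n)} = X_\infty + t_n U^{(n)}$ and $Y^{(n)} = \epsilon X_\infty + t_n W^{(n)}$ with $\epsilon \in \{\pm 1\}$ and $t_n \downarrow 0$; a Taylor expansion then gives $\rho(X^{(n)},Y^{(n)}) \sim t_n\|U^{(n)}-\epsilon W^{(n)}\|$ and $\|f(X^{(n)})-f(Y^{(n)})\| \sim t_n\|Df_{X_\infty}(U^{(n)}-\epsilon W^{(n)})\|$, so after renormalizing and extracting a further limit one obtains a unit tangent vector $Z_\infty \in T_{X_\infty}\M$ with $Df_{X_\infty}(Z_\infty) = 0$.

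Ruling out such a $Z_\infty$ is the main obstacle. When every block $X_{\infty,\ell}$ has full column rank, $f$ is smooth at $X_\infty$ and a dimension count identifies $\ker Df_{X_\infty}$ exactly with the tangent space $T_{X_\infty}(HX_\infty)$ to the $H$-orbit; the desired conclusion $Z_\infty = 0$ therefore reduces to the tangent-space transversality $T_{X_\infty}\M \cap T_{X_\infty}(HX_\infty) = \{0\}$. This transversality is strictly stronger than the injectivity hypothesis of the conjecture, so the genuine content is to show that injectivity of $f|_\M$ modulo $\pm$ already forces tangent-wise transversality when $\M$ is linear. I would approach this by combining dimension counts with the semialgebraic transversality machinery developed in Theorem~\ref{thm:semialgebraic}. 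A secondary but nontrivial difficulty is the singular stratum of $f$, where some block $X_{\infty,\ell}$ is rank-deficient and the polar factor fails to be differentiable; this will likely require stratifying $\M$ by rank profile together with Lojasiewicz-type uniform estimates from real semialgebraic geometry, or alternatively a variational argument in the spirit of Balan--Zou and Derksen et al.\ that sidesteps differentiability via a direct Procrustes alignment on the fly.
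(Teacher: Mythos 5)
First, note that the paper itself offers no proof of this statement: it is posed as an open conjecture, supported only by numerical evidence and analogy with~\cite{balan2016lipschitz, derksen2024bi}. So there is no paper proof to compare against, and your proposal must be judged on whether it actually closes the problem. It does not. The upper Lipschitz bound is fine: the Araki--Yamagami inequality $\|\sqrt{A^*A}-\sqrt{B^*B}\|_F\le\sqrt{2}\,\|A-B\|_F$ is a genuine theorem in the Hilbert--Schmidt norm, it sums over blocks, and combined with $f(Y)=f(-Y)$ it gives $\|f(X)-f(Y)\|\le\sqrt{2}\,\rho(X,Y)$; this direction does not even use the prior $\M$. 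The compactness-plus-blow-up reduction for the lower bound is also sound in outline, and your identification of $\ker Df_{X_\infty}$ with the orbit tangent space at full-column-rank points is correct (the dimension count checks out: $f$ is a submersion onto the positive-definite cone there).

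The gap is that the two steps you defer are not loose ends but the entire content of the conjecture. (i) You must show that set-theoretic injectivity of $f|_{\M}$ modulo $\pm$ implies the infinitesimal statement $T_{X}\M\cap T_{X}(HX)=\{0\}$ at every $X\in\M$. This implication is false for general injective smooth maps (compare $x\mapsto x^3$ at the origin), so it cannot be waved through; it would have to be extracted from the specific algebraic structure of orbit maps of compact groups restricted to linear subspaces, and Theorem~\ref{thm:semialgebraic} does not supply it --- that theorem produces, for a \emph{generic} translate of $\M$, set-theoretic transversality, not tangent-space transversality, and the conjecture as stated assumes only injectivity of the second moment on a \emph{given} subspace. (ii) The rank-deficient stratum is not ``secondary'': the closure of $\M$ inevitably meets it (at least at $0$, and typically along positive-dimensional loci), $f$ is only Lipschitz and not differentiable there, and uniformity of the lower bound as $X_\infty$ approaches such points is exactly where the analysis in \cite{balan2016lipschitz} and \cite{derksen2024bi} requires its hardest estimates (Clarke subdifferentials or explicit Procrustes alignments rather than Taylor expansion). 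As written, the proposal is a plausible research program with a correct easy half, not a proof; the conjecture remains open.
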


\bibliographystyle{plain}

\end{document}